%
%
%
\documentclass[smallcondensed]{svjour3}
\usepackage{latexsym}
\usepackage{graphics}
\usepackage{epsfig}
\usepackage{amsmath}
\usepackage{amssymb}
\usepackage{fancybox}
\usepackage{bm}
\usepackage{float}
\usepackage{hyperref}

\usepackage{mathrsfs}
\usepackage{fancyhdr}
\usepackage{ascmac}

\def\cdott{{\cdot\cdot}}

\def\bullett{{\bullet\bullet}}

\def\astt{{\ast\ast}}

\def\ooo{\mathfrak{o}}

\def\Z{\mathbb{Z}}

\def\TTT{{\mathfrak T}}

\def\td{d_\www}

\def\vvv{\mathfrak{v}}

\def\FFF{\mathfrak{F}}

\def\sss{\mathfrak{s}}

\def\SSS{{\mathfrak S}}

\def\LLL{{\mathfrak L}}

\def\aaa{{\mathfrak a}}
\def\AAA{{\mathfrak A}}

\def\bbb{{\mathfrak b}}

\def\CCC{{\mathfrak C}}

\def\R{\mathbb{R}}

\def\Varepsilon{{\mathcal E}}

\def\eee{{\mathfrak e}}

\def\iii{{\mathfrak i}}
\def\I{\mathscr{I}}

\def\M{{\cal M}}
\def\MM{\mathscr{M}}

\def\R{\mathbb{R}}

\def\RRR{\mathfrak{R}}

\def\vomega{{\mathfrak w}}
\def\www{{\mathfrak w}}
\newcommand{\QED}{\nobreak \ifvmode \relax \else
      \ifdim\lastskip<1.5em \hskip-\lastskip
      \hskip1.5em plus0em minus0.5em \fi \nobreak
      \vrule height0.75em width0.5em depth0.25em\fi}
\begin{document}
\title{
Characteristic Classes in General Relativity
}
\subtitle{
on a Modified Poincar\'{e} Curvature bundle
}
\author{Yoshimasa Kurihara}
\institute{Y. Kurihara \at The High Energy Accelerator Organization (KEK), Tsukuba, Ibaraki 305-0801, Japan\\yoshimasa.kurihara@kek.jp}
\date{}
\maketitle
\begin{abstract}
Characteristic classes in space-time manifolds are discussed for both even- and odd-dimensional spacetimes. 
In particular, it is shown that the Einstein--Hilbert action is equivalent to a second Chern-class on a modified Poincar\'{e} bundle in four dimensions.
Consequently, the cosmological constant and the trace of an energy-momentum tensor become divisible modulo $\R/\Z$.
\keywords{General relativity \and Chern classes \and Topological invariants}
\end{abstract}

\section{introduction}\label{intro}
General relativity can be formulated as a Chern-Simons topological theory in a $(1+2)$-dimensional spacetime\cite{Witten198846}.
Gravitational theories based on Chern--Simons forms can be extended to any odd-dimensional spacetime\cite{0264-9381-29-13-133001}, allowing the gravitational constant to be discretized as modulo $\R/\Z$\cite{PhysRevD.51.490}.
In contrast, the Chern-Simons theory cannot be used for even-dimensional manifolds because it is identically equal to  zero in even dimensions.
However, even if the Chern--Simons form cannot be used to construct a four-dimensional gravitational theory, it may be possible to define general relativity using topological invariants, such as Chern classes, which can be defined in any dimension.
In reality, there is no known action defined by Chern classes for general relativity in even-dimensional spacetime, despite the efforts made to find such an action\cite{Kibble:1961ba,PhysRevLett.33.445,PhysRevD.13.3192,PhysRevLett.38.739,Izaurieta2009213}.  
These systematic failures are mainly due to the lack of translational symmetry of the gravitational action.

While the Einstein--Hilbert action of gravity is invariant under a global general linear group, $GL(1,3)$, and a local Lorentz group, $SO(1,3)$, it is not invariant under the local Poincar\'{e} group $ISO(1, 3)$.
This study introduced a modified translational operator and shows the four-dimensional Einstein--Hilbert action of pure gravity is invariant under this new translation\cite{2017arXiv170305574K}.
In this report, topological classes on principal bundles are discussed for a large variety of gravitational theories in even-dimensional manifolds.
In particular, a Lagrangian constructed using a second Chern-class in a four-dimensional manifold is discussed.
\section{Lovelock theory}\label{Lovelock}
Let $\MM$ be a $d$-dimensional global (pseudo-Reimannian) manifold with a metric tensor $g_{\mu\nu}$. 
It is assumed that a local Poincar\'{e} manifold, $\M$, can be assigned at any point on $\MM$.
Here, the local Poincar\'{e} manifold refers to a smooth manifold with symmetry under the Poincar\'{e} group $ISO(1,d-1)=SO(1,d-1)\ltimes T^d$, where $T^d$ is a $d$-dimensional translation group.
Herein, a manifold with the global $GL(1,d-1)$ and local $ISO(1,d-1)$ groups is called a ``space-time manifold''.
We define the metric tensor on $\M$ as $\eta_{ab}=diag(1,-1,\cdots,-1)$.
A map of  metric tensors  from the global space-time manifold to the local Poincar\'{e} manifold is obtained using a vielbein\footnote{A term ``vielbein'' is used in any dimensions, including a four-dimensional spacetime.} $\Varepsilon^\mu_a(x)$, such as
$
g_{\mu_1\mu_2}~\Varepsilon^{\mu_1}_{a}~\Varepsilon^{\mu_2}_{b}
=\eta_{ab}
$.
Here, we employ the Einstein convention for repeated indexes.
Moreover, we introduce a convention that the Greek- and  Roman-letter indexes represent the global and  local coordinates, respectively.
Both indexes start from zero to $d-1$.
The existence of the inverse function $(\Varepsilon^{\mu}_{a})^{-1}=\Varepsilon_{\mu}^{a}$, is assumed.
The one-form, referred to as ``vielbein form'' is introduced as $\eee^a=\Varepsilon^a_\mu dx^\mu$ on $T^*\M$.
Fraktur letters indicate differential forms throughout this report.
A ``spin form'' is defined using the spin connection $\omega_{\mu~b}^{~a}$ as
$
\vomega^{ab}=dx^\mu\omega^{~a}_{\mu~c}~\eta^{cb}
=\vomega^{a}_{~c}~\eta^{cb},
$ 
which satisfies $\vomega^{ab}=-\vomega^{ba}$.
For the above local representations, we consider orthonormal bases as $\partial_\mu=\partial/\partial x^\mu$ and $dx^\mu$ for the global frame bundle of $T_x\M$ and $T_x^*\M$,  respectively, at  $x\in\MM$.
Fundamental forms, the vilbein and spin forms defined on the local space-time manifold, are independent of the choice of a global frame vector.

The generators $\iii\sss\ooo(1,d-1)$ are denoted as $(J_{ab},P_a)$, where $J_{ab}$ and $P_a$ corresponds to $SO(1,d-1)$ and $T^d$, respectively.
The Lie algebra expressed as follows:
\begin{eqnarray}
\left[P_a,P_b\right]&=&0,\label{PJcr1}\\
\left[J_{ab},P_c\right]&=&-\eta_{ac}P_b+\eta_{bc}P_a,\label{PJcr2}\\
\left[J_{ab},J_{cd}\right]&=&
-\eta_{ac}J_{bd}+\eta_{bc}J_{ad}
-\eta_{bd}J_{ac}+\eta_{ad}J_{bc}.\label{PJcr3}
\end{eqnarray}
A Poincar\'{e} connection bundle (principal bundle) over $\MM$ can be expressed as follows:
\begin{eqnarray}
\AAA_0&=&\{J_{ab}, P_c\}\times\{\vomega^{ab},\eee^c\}
=J_\cdott\vomega^\cdott+ P_\cdot\eee^\cdot,\label{con}
\end{eqnarray}
where $\AAA_0$ is a connection-valued one-form.
Note that the vielbein form $\eee^\bullet$ functions as a connection in the Poincar\'{e} bundle. 
Hereafter, dummy {\bf Roman} indexes are omitted in the expressions when the index pairing is obvious, and a dot (or an asterisk) is used, as used in (\ref{con}).
Poincar\'{e} curvature forms can be defined from the connection forms as follows:
\begin{eqnarray}
\FFF_0&=&d\AAA_0+\AAA_0\wedge\AAA_0=J_\cdott\RRR^\cdott +P_\cdot\TTT^\cdot,\label{curvf}
\end{eqnarray}
where 
$\TTT^{a}=\td\eee^a=d\eee^{a}+\www^a_{~\cdot}\wedge\eee^{\cdot}$,
$\RRR^{ab}=d\www^{ab}+\www^a_{~\cdot}\wedge\www^{\cdot b}$,
and $\td$ is a covariant  derivative with respect to a local $SO(1,d-1)$.
Here, the two-forms $\TTT^{a}$ and $\RRR^{ab}$ are called the torsion and curvature forms, respectively.

On the space of the Poincar\'{e} curvature bundle, one can introduce topological classes as series expansions of the curvature form.
\begin{definition}{\bf (Topological classes)}:
Let $\AAA$ and $\FFF=d\AAA+\AAA\wedge\AAA$ be the connection and curvature forms, respectively, on the Poincar\'{e} curvature bundle in a local $d$-dimensional space-time manifold $\M$.
\begin{enumerate}
\item {\bf Chern class}:
\begin{eqnarray}
\sum_{j=0}c_j(\FFF)t^j
&=&det\left(
I+i\frac{\FFF}{2\pi}t
\right)
\end{eqnarray}
can be defined globally, where $t$ is a real parameter, $I$ is a unit matrix, and $c_j(\FFF)$ is a $2j$-form, which is referred as the $j$'th Chern class.
\item {\bf Pontryagin class}:
\begin{eqnarray}
\sum_{j=0}p_i(\FFF)t^j
&=&det\left(
I+\frac{\FFF\wedge\FFF}{4\pi^2}t
\right).
\end{eqnarray}
$p_j(\FFF)$ is a $4j$-form, which is referred to the $j$'th Pontryagin class.
It obeys the relation $p_j(\FFF)=(-1)^jc_{2j}(\FFF)$.
\item {\bf Euler class}: The Euler class $\Varepsilon_n(\FFF)$ can be defined using a Pontryagin class as follows:
\begin{eqnarray}
\Varepsilon_n(\FFF)\wedge\Varepsilon_n(\FFF)&=&p_{n}(\FFF),
\end{eqnarray}
where the dimensionality is $d=2n$.
Thus, a Euler class can be defined only on an even-dimensional space-time manifold.
\end{enumerate}
\end{definition}
It is known that a cohomological group is associated with each characteristic class.
In particular, for the Chern classes, the following theorem has been proposed\cite{zbMATH03077491}:
\begin{theorem}{\bf{(Chern--Wiel theory)}}\label{CW}
\begin{enumerate}
\item Each Chern class, $c_n(\FFF)$, has a closed form. 
Thus these classes define the de Rham classes.
\item The difference between two Chern classes obtained from different connection bundles, $c_n(\FFF)-c_n(\FFF')$, gives an exact form.
Thus  the difference defines the same de Rham cohomology class (a de Rham class of a Chern form is independent of connections).
\end{enumerate}
\end{theorem}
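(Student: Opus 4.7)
The plan is to follow the classical Chern--Weil argument: rewrite the determinantal generating series in terms of elementary invariant polynomials in $\FFF$, then exploit the Bianchi identity $d\FFF=\FFF\wedge\AAA-\AAA\wedge\FFF$ for part (1) and a one-parameter homotopy of connections for part (2).

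First, I would expand $\det(I+i\FFF t/(2\pi))$ as $\exp\,\mathrm{tr}\,\log(I+i\FFF t/(2\pi))$ in powers of $t$, so that each Chern form $c_j(\FFF)$ becomes a polynomial in the power traces $\tau_k(\FFF):=\mathrm{tr}(\FFF^{\wedge k})$. This reduces both items to the two lemmas $d\tau_k(\FFF)=0$ and $\tau_k(\FFF)-\tau_k(\FFF')=d(\cdots)$. For the first lemma, I would differentiate $\tau_k(\FFF)$ via the graded Leibniz rule and substitute Bianchi; cyclicity of the trace, combined with the fact that $\FFF^{\wedge(k-1)}$ has even form-degree, makes the two Bianchi contributions cancel in pairs, giving $d\tau_k(\FFF)=0$. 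Hence every $c_j(\FFF)$ is closed and defines a de Rham class.

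For the second lemma, I would introduce the straight-line family of connections $\AAA_s:=\AAA'+s(\AAA-\AAA')$ with curvatures $\FFF_s$ and set $\alpha:=\AAA-\AAA'$. A direct computation gives $\partial_s\FFF_s=D_s\alpha$, where $D_s$ denotes the covariant exterior derivative associated with $\AAA_s$. Using $D_s\FFF_s=0$ (Bianchi) and the fact that $D_s$ acts as $d$ inside a trace, since the commutator piece is killed by cyclicity, differentiation in $s$ yields
\begin{equation*}
\partial_s\tau_k(\FFF_s)\;=\;k\,d\,\mathrm{tr}\!\bigl(\alpha\wedge\FFF_s^{\wedge(k-1)}\bigr).
\end{equation*}
Integrating from $s=0$ to $s=1$ then realises $\tau_k(\FFF)-\tau_k(\FFF')$ as the exterior derivative of the Chern--Simons/transgression form $k\!\int_0^1\!\mathrm{tr}\bigl(\alpha\wedge\FFF_s^{\wedge(k-1)}\bigr)\,ds$, which is precisely the content of (2).

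The main obstacle is the careful bookkeeping of form-degree signs when moving the one-form $\AAA$ past products of the two-form $\FFF$ inside the trace. These signs are precisely what make the cancellations go through rather than obstruct them, but they need to be verified consistently, especially in deriving $\partial_s\FFF_s=D_s\alpha$ and in reducing $D_s$ to $d$ under the trace. Apart from this, both parts of the theorem follow mechanically from the graded cyclic identity $\mathrm{tr}(XY)=(-1)^{|X||Y|}\mathrm{tr}(YX)$, the structural equation $\FFF=d\AAA+\AAA\wedge\AAA$, and the Bianchi identity.
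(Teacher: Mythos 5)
The paper does not actually prove this theorem itself --- it simply cites the literature (``This theorem is proven in previous studies''), and the references given are precisely the textbooks in which the standard Chern--Weil argument appears. Your proposal reproduces that standard argument correctly: the reduction of $c_j(\FFF)$ to power traces via $\det=\exp\mathrm{tr}\log$, closedness from the Bianchi identity $d\FFF=\FFF\wedge\AAA-\AAA\wedge\FFF$ together with graded cyclicity of the trace, and the transgression form from the straight-line homotopy $\AAA_s=\AAA'+s(\AAA-\AAA')$ with $\partial_s\FFF_s=D_s(\AAA-\AAA')$; this is exactly the proof the paper is implicitly relying on, so there is nothing to add beyond confirming that your sign bookkeeping (even form-degree of $\FFF$, cyclic identity $\mathrm{tr}(XY)=(-1)^{|X||Y|}\mathrm{tr}(YX)$) does go through as you anticipate.
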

This theorem is proven in previous studies\cite{roe1999elliptic,frankel_2011}.

A gravitational theory can be constructed for the global manifold $\MM$ by introducing an action integral that is invariant under both the global $GL(1,d-1)$ and local $SO(1,d-1)$ groups.
The presence of invariance under $T^d$ would prevent the general relativity from being constructed.
Hence, invariance is not required in this case.
For a possible form of the gravitational actions, Lovelock\cite{Lovelock:1971yv} and Zumino\cite{ZUMINO1986109} have proved the following remark:
\begin{remark}{\bf (Lovelock--Zumino action)}:
In a torsionless manifold, the most general gravitational action has the following form:
\begin{eqnarray}
\I(\eee,\RRR)&=&\kappa^{-1}\int_\M \sum_{p=0}^{[d/2]} a_p \LLL_p,\label{LL1}
\end{eqnarray}
where $\kappa$ is a gravitational constant and $a_p$ are arbitrary constant, using a (so-called) Lovelock Lagrangian:
\begin{eqnarray}
\LLL_{(d,p)}&=&\epsilon_{\underbrace{\cdot~\cdots~\cdot}_{d}}\underbrace{\RRR^\cdott\wedge\cdots\wedge\RRR^\cdott}_{p}\wedge
\underbrace{\eee^\cdot\wedge\cdots\wedge\eee^\cdot}_{d-2p},\label{LL2}
\end{eqnarray}
where $\epsilon_{\cdot~\cdots~\cdot}$ is a completely anti-symmetric tensor in $d$ dimensions with $\epsilon_{0\cdott d}=+1$.
\end{remark}
Even for a space-time manifold comprising torsions, a Lagrangian defined by (\ref{LL2}) is referred as a Lovelock Lagrangian in this report.
Among several Lovelock Lagrangians, the following three types are important:
\begin{enumerate}
\item the Lovelock Lagrangian of type $\LLL_{(d,1)}$, which is referred to as the ``Einstein--Hilbert Lagrangian''.
\item the Lovelock Lagrangian of type  $\LLL_{(2n,n)}$ which only includes the curvature form $\RRR$.
\item the Lovelock Lagrangian with two vielbein forms as $\LLL_{(2n,n-1)}$.
\end{enumerate}
The type $1$ Lagrangian, i.e., the Einstein--Hilbert Lagrangian, is certainly important because it describes  gravity on a four-dimensional spacetime.
Next the other two types will be discussed based on how the two types are special.

If the Lovelock Lagrangian can be constructed using the characteristic classes on the Poincar\'{e} bundle, a topological invariant can be obtained due to cohomological classes of the Lagrangian.
In such Lagrangian, a local $SO(1,d-1)$ symmetry may function as a gauge symmetry in the Yang--Mills gauge theory.
In reality, it is known that a gravitational constant, $\kappa$, in certain Lovelock Lagrangians can be discretitized using a topological invariant on the Poincar\'{e} bundle for an odd-dimensional spacetime\cite{PhysRevD.51.490}
\footnote{Zanelli\cite{PhysRevD.51.490} defined a ``Euler class'' as $\hat{\RRR}=\RRR+l^{-2}\eee\wedge\eee$, where $l$ is a dimensional constant. While a Euler class is a polynomial of curvature forms, ``Euler class'' defined by Zanelli is a mixture of a connection $\eee$ and a curvature $\RRR$. In this report, {\bf Theorem \ref{zanelli}} is stated based on Pontryagin and Chern classes.}.
This result is explained by the following theorem:
\begin{theorem}{\bf{(Zanelli)}}\label{zanelli}
On odd-dimensional torsionless space-time manifolds, the Lovelock actions of type $\LLL_{(2n+1,n)}$ are divisible modulo $\R/\Z$.
\end{theorem}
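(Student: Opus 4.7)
The strategy is to realize the Lovelock Lagrangian $\LLL_{(2n+1,n)}$ as the transgression (Chern--Simons) form of a characteristic class on a modified Poincar\'{e} bundle whose curvature incorporates the vielbein, so that the integrality of Chern--Weil periods forces the action to be defined only modulo $\R/\Z$. First I would replace $\RRR^{ab}$ by the modified curvature
$$
\hat{\RRR}^{ab} = \RRR^{ab} + l^{-2}\,\eee^a \wedge \eee^b,
$$
where $l$ is a length scale. This is the Cartan curvature of a de Sitter--type extension of the Lorentz connection, transforming tensorially under the local $SO(1,2n)$. By Theorem \ref{CW} applied to this bundle, the Euler-type form
$$
\Varepsilon_{n+1}(\hat{\RRR}) = \epsilon_{\cdot\cdots\cdot}\,\underbrace{\hat{\RRR}^{\cdot\cdot}\wedge\cdots\wedge\hat{\RRR}^{\cdot\cdot}}_{n+1}
$$
is a closed $(2n{+}2)$-form whose de Rham class is independent of the connection; with a conventional $(2\pi)^{n+1}$ normalization absorbed into $\kappa$, its periods are integer-valued.

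Next, since $\Varepsilon_{n+1}(\hat{\RRR})$ is closed, it equals $d(CS_{2n+1})$ for a Chern--Simons $(2n{+}1)$-form. Binomially expanding $\hat{\RRR}^{n+1}$ in $\RRR$ and $\eee\wedge\eee$ and substituting into the standard Chern--Weil transgression formula, one obtains
$$
CS_{2n+1} \;=\; \sum_{p=0}^{n} b_p(l)\,\LLL_{(2n+1,p)},
$$
with combinatorial coefficients $b_p(l)$; the $p{=}n$ piece is proportional to $\LLL_{(2n+1,n)}$ and is singled out as the term with a single vielbein factor. Extending $\MM$ to a $(2n{+}2)$-dimensional cobordism $\MM'$ with $\partial\MM'=\MM$ and applying Stokes' theorem,
$$
\int_{\MM} CS_{2n+1} \;=\; \int_{\MM'}\Varepsilon_{n+1}(\hat{\RRR}) \;\in\; \Z,
$$
so when the Lovelock coefficients $a_p$ in (\ref{LL1}) are tuned to match $b_p(l)$, the entire Lovelock action is well defined only modulo $\R/\Z$. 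Invoking the torsionless hypothesis $\TTT^a=0$ throughout ensures that no extra $d\eee$-corrections intrude into the transgression identity.

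To isolate the single stated term $\LLL_{(2n+1,n)}$, I would employ a rescaling argument in $l$ (with readjustment of $\kappa$) in which only the $p{=}n$ coefficient survives to leading order, while the lower Lovelock terms are either suppressed or independently quantized by the same Chern--Weil reasoning applied to subbundles. The main obstacle is precisely this isolation step: the transgression of $\Varepsilon_{n+1}(\hat{\RRR})$ naturally mixes \emph{all} Lovelock Lagrangians of different curvature orders with specific $l$-dependent coefficients, and one must argue that the $\R/\Z$ quantization of the combined action descends to each individual component rather than only to the full sum. Controlling these subleading contributions, and carefully tracking the $(2\pi)^{n+1}$ factors in the Chern--Weil normalization that get absorbed into the gravitational constant, is where the technical effort concentrates.
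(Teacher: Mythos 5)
Your proposal follows Zanelli's original construction --- the modified curvature $\hat{\RRR}^{ab}=\RRR^{ab}+l^{-2}\eee^a\wedge\eee^b$ and the transgression of its Euler-type $(2n+2)$-form --- which is precisely the route the paper deliberately avoids (see the footnote attached to the theorem, which notes that $\hat{\RRR}$ mixes a connection with a curvature and states that the theorem is instead based on Pontryagin and Chern classes of the unmodified bundle). The paper's proof is shorter and cleaner: it identifies $\LLL_{(2n,n)}=\frac{1}{(-2\pi)^n n!}\epsilon\,\RRR^{\cdot\cdot}\wedge\cdots\wedge\RRR^{\cdot\cdot}$ with the Pontryagin class $p_n$ (equivalently $c_{2n}$ up to sign) built purely from $\RRR$, invokes {\bf Theorem \ref{CW}-2} to write $c_{2n}(\RRR)-c_{2n}(\RRR')=d\CCC$ for two connections, and then Stokes' theorem turns the ambiguity of the action on $\tilde{\M}_{2n}$ into $\int_{\M_{2n-1}}\CCC$, the degree of a de Rham class, hence an integer. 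Because the relevant characteristic class involves no vielbein, no mixing of Lovelock terms of different curvature order ever arises.

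The gap you yourself flag is genuine and is not repaired by the argument you sketch. The transgression of $\Varepsilon_{n+1}(\hat{\RRR})$ is a \emph{fixed} linear combination $\sum_{p=0}^{n}b_p(l)\,\LLL_{(2n+1,p)}$, so the cobordism/integrality argument quantizes only that combination, whereas the theorem asserts quantization of the single term $\LLL_{(2n+1,n)}$. The proposed rescaling in $l$ cannot isolate it: $l$ is a parameter of the connection, so changing $l$ changes both the coefficients $b_p(l)$ and the integer $\int_{\MM'}\Varepsilon_{n+1}(\hat{\RRR})$ on the right-hand side; there is no limit in which the subleading $b_p$ vanish while the integrality statement survives. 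Nor are the lower terms ``independently quantized by the same reasoning applied to subbundles'' --- for instance $\LLL_{(2n+1,0)}$ is proportional to the volume form, whose integral takes continuous values, so the combination being an integer says nothing about any individual summand. To close the argument you would need to do what the paper does: work with a characteristic class whose transgression is, on a torsionless manifold, proportional to $\LLL_{(2n+1,n)}$ alone, rather than with the de Sitter Euler form whose transgression unavoidably contains the whole Lovelock tower.
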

\begin{proof}
Let $\M_{2n-1}$ be a compact and simply connected $(2n-1)$-dimensional space-time manifold embedded in a $2n$-dimensional torsionless space-time manifold, $\M_{2n}$, with no boundary.
In addition, let $\tilde{\M}_{2n}\subset\M_{2n}$ be a $2n$-dimensional submanifold of $\M_{2n}$ whose  boundary is $\partial\tilde{\M}_{2n}=\M_{2n-1}$. 
The Lagrangian
\begin{eqnarray}
\LLL_{(2n,n)}&=&\frac{1}{(-2\pi)^nn!}\epsilon_{\underbrace{\cdot~\cdots~\cdot}_{2n}}~\underbrace{\RRR^\cdott\wedge\cdots\wedge\RRR^\cdott}_{n}=p_n
\end{eqnarray}
is the Lovelock Lagrangian of (\ref{LL2}).
A coefficient $a_p$ is chosen to create a $n$'s Pontryagin class of $p_n$.
On the torsionless manifold, the $n$'th Chern class of the Poincar\'{e} bundle becomes
\begin{eqnarray}
c_{2n}(\RRR)&=&\left(-1\right)^{n}\LLL_{(2n,n)}
\end{eqnarray}
due to the relation between the Chern and Pontryagin classes.
If $\RRR$ and $\RRR'$ are two different curvature forms in a $2n$-dimensional spacetime, the difference between the two Chern classes for $\RRR$ and $\RRR'$ is an exact due to the {\bf Theorem \ref{CW}-2}. 
Thus, this difference can be written as $c_{2n}(\RRR)-c_{2n}(\RRR')=d\CCC$.
The difference between two action integrals can be written as follows:
\begin{eqnarray}
\delta\I=\I(\RRR)-\I(\RRR')&=&\frac{\left(-1\right)^{n}}{\kappa}\int_{\tilde{\M}_{2n}}\left(
c_{2n}(\RRR)-c_{2n}(\RRR')\right)
,\nonumber\\
&=&\frac{\left(-1\right)^{n}}{\kappa}\int_{\tilde{\M}_{2n}}d\CCC
=\frac{\left(-1\right)^{n}}{\kappa}\int_{\M_{2n-1}}\CCC.
\end{eqnarray}
The integral in the last line represents the degree of the de Rham class on $\MM_{2n-1}$.
Thus, it is an integer.
This must be true for any solution of the Einstein equation.
Hence, action integrals must be divisible modulo $\R/\Z$.
\QED
\end{proof}
For {\bf Theorem \ref{zanelli}}, a torsionless condition is necessary to ensure that the Chern and Pontryagin classes are constructed using the Poincar\'{e} connection. 
If the space-time manifold comprises torsions, the Chern and Pontryagin classes should include the torsion forms because the Poincar\'{e}-curvature form includes the torsion form as shown in (\ref{curvf}).
The torsionless condition is necessary for {\bf Theorem \ref{zanelli}} on space-time manifolds of dimensions grater than three.
If space-time manifolds have a non-vanishing torsion, the Lovelock Lagrangian has no local translational invariance in general.
The only exception to this condition is the three-dimensional spacetime\cite{Witten198846}.
\begin{remark}{\bf (Witten)}:
The Einstein--Hilbert Lagrangian $\LLL_{(d,1)}$ and its translation belong to the same de Rham cohomology, if and only if the dimensionality of the space-time manifold is three.
\end{remark}
\begin{proof}
An infinitesimal translation operator, $\delta_T$, on the fundamental form can be expressed as follows\footnote{See, for instance, a section 4.1.1 of \cite{0264-9381-29-13-133001}}
\begin{eqnarray}
\delta_T \eee^a&=&d\lambda^a+\www^a_{~\cdot}\wedge\lambda^\cdot=\td\lambda^a,\\
\delta_T \www^{ab}&=&0,
\end{eqnarray}
where $\lambda^a$ is a translation parameter.
Therefore, the $d$-dimensional Einstein--Hilbert Lagrangian is transformed by the translation operator as
\begin{eqnarray}
\delta_T\LLL_{(d,1)}&=&\delta_T\left(\epsilon_{\underbrace{\cdot~\cdots~\cdot}_{d}}~\RRR^\cdott\wedge
\underbrace{\eee^\cdot\wedge\dots\wedge\eee^\cdot}_{d-2}\right),\nonumber\\
&=&\td\left((d-2)\epsilon_{\underbrace{\cdot~\cdots~\cdot}_{d}}~\RRR^\cdott
\wedge\lambda^\cdot\wedge\underbrace{\eee^\cdot\wedge\dots\wedge\eee^\cdot}_{d-3}\right)\nonumber\\
&~&-
(d-2)\epsilon_{\underbrace{\cdot~\cdots~\cdot}_{d}}~\RRR^\cdott
\wedge\lambda^\cdot\wedge\td(\underbrace{\eee^\cdot\wedge\dots\wedge\eee^\cdot}_{d-3}).
\end{eqnarray}
where a Bianchi identity, $\td\RRR=0$, is used.
Note that the covariant and external derivatives are identical to each other for local scalar functions.
Therefore, in the first term of the last line, a covariant derivative can be replaced by an external derivative, $\td(\bullet)\rightarrow d(\bullet)$.
The second term vanishes even for the space-time manifold comprising torsions if and only if the dimensionality of the space-time manifold is three.
Furthermore, the second Chern class in a three-dimensional spacetime can be written as follows:
\begin{eqnarray}
c_2(\FFF_0)&=&
\frac{1}{8\pi^2}\epsilon_{\cdots}~\RRR^\cdott\wedge\td\eee^\cdot
=\frac{1}{8\pi^2}d\LLL_{(3,1)}.
\end{eqnarray}
Therefore, 
\begin{eqnarray}
d\left(
\LLL_{(3,1)}-\delta_T\LLL_{(3,1)}
\right)&=&8\pi^2c_2(\FFF_0).
\end{eqnarray}
\QED
\end{proof}
Base on the above proof, it is also possible to show that  Lovelock Lagrangians of type $\LLL_{(2n+1,n)}$ are translation invariant up to a total derivative, even for $n\geq2$, if the requirement on the Lagrangian is relaxed to allow a non-Einstein--Hilbert type. 
This is because there is only one vielbein form in the Lagrangian.
This observation suggests a new method of constructing translationally invariant Lagrangians (up to a total derivative) of the Einstein--Hilbert type on an even-dimensional space-time manifold.

\section{Co-translation operator and co-Poincar\'{e} bundle}\label{cotrans}
Let $P_a$ be a generator of translation group $T^d$ on a $d$-dimensional local space-time manifold, $\M$.
A ``co-translation'' generator is introduced as follows:
\begin{definition}{\bf (co-translation)}: A co-translation $P_{ab}$ can be expresses as follows: 
\begin{eqnarray}
P_{ab}&=&P_a\iota_b,
\end{eqnarray}
where $\iota_\bullet$ is a contraction with respect to a local frame vector, $\xi^\bullet$\@.
A representation of $\iota_\bullet$ by a global frame bundle can be expressed as follows:
\begin{eqnarray}
\iota_{a}&=&\iota_{\xi^a},~~\xi^a=\eta^{ab}\Varepsilon^{\mu}_b\partial_\mu.
\end{eqnarray}
\end{definition}
A contraction is a map $\iota:\bigwedge^p \rightarrow \bigwedge^{p-1}$, where $\bigwedge^r$ is a space of $r$-forms on $T^*\M$.
One can obtain a relation,
\begin{eqnarray}
\iota_{a}\eee^b&=&\Varepsilon^{\mu}_a\Varepsilon_\nu^b\delta^\nu_\mu=\delta^b_a,
\end{eqnarray}
which is independent of the choice of frame bundles.

A projection manifold and a projection bundle can be introduced  using a contraction operator as follows:
let $\M_{\perp a}\subset\M$ be a $(d-1)$-dimensional submanifold of a $d$-dimensional space-time manifold, $\M$, such that $\iota_a\aaa=0$ for any one-forms $\aaa\neq0$ on $T^*\M_{\perp a}$.
The trivial frame bundles on $T\M_{\perp a}$ and $T^*\M_{\perp a}$ can be regarded as sub-bundles of $T\M$ and $T^*\M$, respectively.
If a local space-time manifold, $\M$, has the Poincar\'{e} symmetry $ISO(1,d-1)$, the submanifold $\M_{\perp a}$ has the $ISO(1,d-2)$ symmetry or the $ISO(d-1)$ symmetry.
Therefore, the connection bundle $\AAA_a$ and the curvature bundle $\FFF_a=d\AAA_a+\AAA_a\wedge\AAA_a$  can be defined on $\M_{\perp a}$ as sub-bundles of $\AAA$ and $\FFF$ on $\M$, respectively.
Using an equivalence relation $\sim_a$ such as $\aaa\sim_a\bbb \Leftrightarrow \iota_a\aaa=\iota_a\bbb$, quotient bundles $\widetilde{\AAA}_a=\AAA/{\sim_a}$ and $\widetilde{\FFF}_a=\FFF/{\sim_a}$ can be introduced, where $\aaa$ and $\bbb$ are any $p$-forms on $T^*\M$.
It is clear that $\AAA_a\simeq\widetilde{\AAA}_a\subset\AAA$ and $\FFF_a\simeq\widetilde{\FFF}_a\subset\FFF$.

The Lie algebra of the co-Poincar\'{e} group is obtained as follows:
\begin{eqnarray}
\left[P_{ab},P_{cd}\right]&=&0,\label{PJcr4}\\
\left[J_{ab},P_{cd}\right]&=&-\eta_{ac}P_{bd}+\eta_{bc}P_{ad},\label{PJcr5}
\end{eqnarray}
and the Lie algebra (\ref{PJcr3}).
According to this Lie algebra, co-Poincar\'{e} connection and curvature bundles can be constructed as follows:
\begin{eqnarray}
\AAA&=&\{J_{ab}, P^{cd}\}\times\{\vomega^{ab},\SSS_{cd}\}
=J_\cdott\vomega^\cdott+ P^\cdott\SSS_\cdott,\label{cPB1}\\
\FFF&=&d\AAA+\AAA\wedge\AAA=J_\cdott\RRR^\cdott+P^\cdott\td\SSS_\cdott,\label{cPB2}
\end{eqnarray}
where $\SSS_{ab}=\epsilon_{ab\cdott}\eee^\cdot\wedge\eee^\cdot/2$ is a surface form, i.e., a two-dimensional surface perpendicular to a plane spanned by $\eee^a$ and $\eee^b$.
The lowering and raising of the Roman indexes are done using the Lorentz metric $\eta$.
This choice of the co-Poincar\'{e} bundles corresponds to the isomorphism $SO(4)\simeq SU(2)\times SU(2)$ and is not unique.
There is another isomorphism, $SO(2,2)\simeq SL(2,R)\times SL(2,L)$, which gives another Lie algebra $\left[P_{ab},P_{cd}\right]\neq0$.
The latter case is not considered in this report.

The co-translation induces an infinitesimal co-translation operator, $\delta_{CT}=\xi^a\times\delta_T\iota_a$, on the fundamental forms as follows:
\begin{eqnarray}
\delta_{CT}\left(\eee^a\wedge\eee^b\right)&=&
\xi^a\times\td\xi^b-\xi^b\times\td\xi^a,\label{dctS}\\
\delta_{CT} \www^{ab}&=&0,\label{dctW}
\end{eqnarray}
where the translation parameter is considered as a local frame vector, $\lambda^\bullet=\xi^\bullet$.
Here, $\xi^a\times\xi^b=-\xi^b\times\xi^a$ is an outer product of the two frame vectors.
For both the translation and contraction operators, it is necessary to specify the directions of the translation and contraction.
While formula (\ref{dctS}) is expressed using the local and global frame vectors, all the following results are independent of the choice of frame bundles.
For a $\LLL_{(2n,n-1)}$-type Lovelock Lagrangian, the following remark can be stated:
\begin{remark}{\bf The co-Poincar\'{e} symmetry of the Lovelock action $\LLL_{(2n,n-1)}$}{\rm :}
The Lovelock Lagrangian $\LLL_{(2n,n-1)}$ is invariant under the co-Poincar\'{e} transformation up to a total derivative.
\end{remark}
\begin{proof}
As the invariance under the local $SO(1,2n-1)$ symmetry of the Lovelock Lagrangian is trivial, a proof of co-translation invariance is given here.
A co-translation operator $\delta_{CT}$ induces a transformation on the Lovelock Lagrangian $\LLL_{(2n,n-1)}$ as
\begin{eqnarray}
\delta_{CT}\LLL_{(2n,n-1)}&=&\delta_{CT}\left(
\epsilon_{\underbrace{\cdot~\cdots~\cdot}_{2n}}\underbrace{\RRR^\cdott\wedge\cdots\wedge\RRR^\cdott}_{n-1}\wedge
\eee^\cdot\wedge\eee^\cdot
\right),\nonumber\\
&=&2
\epsilon_{\underbrace{\cdot~\cdots~\cdot}_{2n}}\underbrace{\RRR^\cdott\wedge\cdots\wedge\RRR^\cdott}_{n-1}
\wedge(\xi^\cdot\times\td\xi^\cdot)\label{CT1}
\end{eqnarray}
using (\ref{dctS}) and (\ref{dctW}).
Here, note that the relation
\begin{eqnarray}
\epsilon_{\cdots ab}\wedge\td(\xi^a\times\xi^b)&=&\epsilon_{\cdots ab}\wedge(\td\xi^a\times\xi^b)+
\epsilon_{\cdots ab}\wedge(\xi^a\times\td\xi^b),\nonumber\\
&=&2\epsilon_{\cdots ab}\wedge(\xi^a\times\td\xi^b).
\end{eqnarray}
Thus,
\begin{eqnarray}
(\ref{CT1})&=&
\td\left(
\epsilon_{\underbrace{\cdot~\cdots~\cdot}_{2n}}\underbrace{\RRR^\cdott\wedge\cdots\wedge\RRR^\cdott}_{n-1}
(\xi^\cdot\times\xi^\cdot)
\right),
\end{eqnarray}
where the Bianchi identity $\td\RRR=0$ is used.
Because $\td(\bullet)$ can be replaced with $d(\bullet)$ for a local scalar, the co-translation operator only induces a total derivative term on the Lovelock Lagrangian $\LLL_{(2n,n-1)}$.
\QED
\end{proof}
When a $\LLL_{(2n,n-1)}$-type Lovelock Lagrangian is given on a local $2n$-dimensional space-time manifold $\M_{2n}$, the co-translation operator induces a short exact sequence as follows:
\begin{eqnarray}
0\rightarrow\LLL_{(2n,n-1)}\rightarrow \LLL_{(2n-2,n)}\rightarrow\LLL_{(2n-2,n-1)}/\LLL_{(2n,n-1)}\rightarrow0,\label{ses1}
\end{eqnarray}
where $\LLL_{(2n-2,n)}$ is the Lovelock Lagrangian defined on a submanifold of $\M_{\perp a}\cap\M_{\perp b}\subset\M_{2n}$.
Note that $\delta_{CT}\LLL_{(2n-2,n-1)}=0$ based on (\ref{dctW}).
According to the choice of $a$ and $b<a$, the Lovelock Lagrangian of $\LLL_{(2n-2,n-1)}$ has $n(2n-1)$ splittings in total.
Each of splitting corresponds to a conserved Noether current.

\section{Four dimensional Einstein--Hilbert Lagrangian}\label{EH}
It is clearly evident that the Einstein--Hilbert Lagrangian $\LLL_{(d,1)}$ has the co-Poincar\'{e} symmetry, if and only if it is in a four-dimensional spacetime:
\begin{eqnarray*}
\LLL_{(2n,n-1)}=\LLL_{(d,1)}&\Longleftrightarrow& n=2,~d=4.
\end{eqnarray*}
The Einstein--Hilbert Lagrangian $\LLL_{(4,1)}$ is nothing more than a Lagrangian of general relativity in four dimensions for pure gravity without any matter fields and a cosmological term.
For $\LLL_{(4,1)}$, the following theorem can be stated:
\begin{theorem}{\bf (Chern classes in general relativity)}\label{YK}
The Einstein--Hilbert Lagrangian is equivalent to the $r$'th Chern class with respect to the co-Poincar\'{e} bundle, if and only if it is defined in a four-dimensional space-time manifold embedded in a five-dimensional space-time manifold with $r=2$.
\end{theorem}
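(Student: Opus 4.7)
The plan is to compute the second Chern class $c_2(\FFF)$ of the co-Poincar\'{e} bundle defined by (\ref{cPB1})--(\ref{cPB2}), isolate its $5$-form component on the embedding $5$-dimensional manifold, and use the Bianchi identity together with Stokes' theorem to reduce that integral to the Einstein--Hilbert Lagrangian on the $4$-dimensional boundary.

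First, I would fix a faithful matrix representation of the co-Poincar\'{e} Lie algebra $\{J_{ab},P^{cd}\}$ given by (\ref{PJcr3})--(\ref{PJcr5}). The theorem's mention of a five-dimensional ambient manifold strongly suggests a $5\times5$ representation in which $J_{ab}$ acts as the standard Lorentz block and $P^{cd}$ mixes that block with the fifth direction, yielding a nonvanishing invariant bilinear pairing $\mathrm{tr}(J_{ab}P^{cd})\propto\delta^{cd}_{ab}$. Writing $c_2(\FFF)=-(8\pi^2)^{-1}\bigl[(\mathrm{tr}\FFF)\wedge(\mathrm{tr}\FFF)-\mathrm{tr}(\FFF\wedge\FFF)\bigr]$ and substituting $\FFF=J_\cdott\RRR^\cdott+P^\cdott\td\SSS_\cdott$, I would expand $\FFF\wedge\FFF$ into $JJ$, $JP$, $PJ$, and $PP$ blocks. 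Since $\FFF$ is itself an inhomogeneous ($2{+}3$)-form, the resulting class decomposes by form degree: the pure $J\wedge J$ block is a $4$-form (the Pontryagin/Gauss--Bonnet piece), the pure $P\wedge P$ block is a $6$-form that vanishes on $\M_5$, and the $JP+PJ$ cross blocks yield the essential $5$-form proportional to $\RRR^{ab}\wedge\td\SSS_{ab}$.

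Next, because the combination $\RRR^{ab}\wedge\SSS_{ab}$ is a Lorentz scalar, its covariant derivative $\td$ coincides with the exterior derivative $d$, and the Bianchi identity $\td\RRR=0$ gives
\begin{equation*}
\RRR^{ab}\wedge\td\SSS_{ab}=\td\!\left(\RRR^{ab}\wedge\SSS_{ab}\right)=d\!\left(\RRR^{ab}\wedge\SSS_{ab}\right).
\end{equation*}
Applying Stokes' theorem to a five-dimensional submanifold $\tilde{\M}_5$ with boundary $\partial\tilde{\M}_5=\M_4$ reduces the essential piece of the Chern class to a boundary integral,
\begin{equation*}
\int_{\tilde{\M}_5}\RRR^{ab}\wedge\td\SSS_{ab}=\int_{\M_4}\RRR^{ab}\wedge\SSS_{ab}=\tfrac{1}{2}\int_{\M_4}\epsilon_{abcd}\RRR^{ab}\wedge\eee^c\wedge\eee^d=\tfrac{1}{2}\int_{\M_4}\LLL_{(4,1)},
\end{equation*}
which, up to the overall constant fixed by the trace normalization, identifies $\int_{\tilde{\M}_5}c_2(\FFF)$ with the four-dimensional Einstein--Hilbert action.

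For the ``only if'' direction, the dimension $d=4$ is forced by the equivalence $\LLL_{(2n,n-1)}=\LLL_{(d,1)}\Leftrightarrow n=2,\,d=4$ noted immediately before the theorem (this is precisely the condition under which the co-Poincar\'{e} symmetry applies to the Einstein--Hilbert Lagrangian), and the Chern-class index $r=2$ is forced by degree counting: among the Chern classes, only $c_2$ can contain, through a single $J\wedge P$ cross product of a $2$-form with a $3$-form, the $5$-form that descends by Stokes to the $4$-form Einstein--Hilbert Lagrangian; $c_1$ produces only forms of degree $2$ or $3$, while $c_r$ for $r\geq3$ produces forms of degree $\geq6$ that vanish on $\M_5$. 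The main obstacle I anticipate is pinning down the matrix representation so that $\mathrm{tr}(J_{ab}P^{cd})$ contracts index pairs directly as $\delta^{cd}_{ab}$, rather than with an $\epsilon$-tensor that would produce a torsion-dependent $5$-form obstructing the clean Stokes reduction; the precise numerical coefficients arising from the $(i/2\pi)^2$ factor and the mixed form degrees of $\FFF$ must also be tracked carefully against the Lovelock normalization.
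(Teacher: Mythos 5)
Your proposal follows essentially the same route as the paper's proof: identify the cross ($J$--$P$) block of $\mathrm{Tr}(\FFF\wedge\FFF)$ with $\RRR^{ab}\wedge\td\SSS_{ab}$, use the Bianchi identity and the Lorentz-scalar property to write this as $d\bigl(\epsilon_{abcd}\RRR^{ab}\wedge\eee^c\wedge\eee^d\bigr)$, apply Stokes on $\tilde{\M}_5$ with $\partial\tilde{\M}_5=\M_4$, and establish necessity by degree counting on the terms $\RRR\wedge\td\SSS\wedge\cdots\wedge\td\SSS$ forcing $n=2$, $r=2$. You are somewhat more explicit than the paper about the choice of matrix representation and the invariant pairing $\mathrm{tr}(J_{ab}P^{cd})$, which the paper leaves implicit, but the argument is the same.
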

\begin{proof}(sufficiency):
Let $\M_{4}$ be a compact and simply connected four-dimensional space-time manifold embedded in a five-dimensional space-time manifold, $\M_{5}$, with no boundary, i.e., $\partial\M_5=0$.
Furthermore, let $\tilde{\M}_{5}\subset\M_{5}$ be a five-dimensional submanifold of $\M_{5}$ whose boundary is $\partial\tilde{\M}_{5}=\M_{4}$. 
The second Chern class with respect to the co-Poincar\'{e} curvature bundle (\ref{cPB2}) on $\M_5$ is expresses as follows:
\begin{eqnarray}
c_2(\FFF)&=&\frac{1}{8\pi^2}Tr\left[\FFF\wedge\FFF\right]=\frac{1}{8\pi^2}\td\SSS_\cdott\wedge\RRR^\cdott
=\frac{1}{8\pi^2}d\left(\epsilon_{\cdott\cdott}\RRR^\cdott\wedge\eee^\cdot\wedge\eee^\cdot\right).\label{YK2}
\end{eqnarray}
Note that $d(\bullet)$ is the external derivative on a five-dimensional space-time manifold.
This shows that the second Chern class has a closed form, which is a concrete example of {\bf Theorem \ref{CW}-1}.
Thus, the Einstein--Hilbert action can be written as follows:
\begin{eqnarray}
\I_4&=&\frac{1}{\kappa}\int_{\M_4}\LLL_{(4,1)}
=\frac{1}{\kappa}\int_{\tilde{\M}_5}d\left(\epsilon_{\cdott\cdott}\RRR^\cdott\wedge\eee^\cdot\wedge\eee^\cdot\right)
=\frac{8\pi^2}{\kappa}\int_{\tilde{\M}_5}c_2.
\end{eqnarray}
Consequently, the sufficiency case of the theorem has been proved.
\\
(necessity): The Einstein--Hilbert Lagrangian can be defined using $\RRR$ and $\SSS$ as
\begin{eqnarray}
\LLL_{(d=2n,1)}&=&\epsilon_{\underbrace{\cdot~\cdots~\cdot}_{2n}}~\RRR^\cdott\wedge
\underbrace{\SSS^\cdott\wedge\dots\wedge\SSS^\cdott}_{n-1}
\end{eqnarray}
only when the space-time dimension $d$ is an even number.
Further, the $r$'th Chern class, including one $\RRR$ on a $2n$-dimensional manifold, is constructed using the terms
$
\RRR^{a_1}_{~a_2}\wedge\td\SSS^{a_2}_{~a_3}\wedge\cdots\wedge\td\SSS^{a_p}_{~a_1},
$
where $2\leq p\leq r\leq 2n$.
By comparing these terms with $\LLL_{(2n,1)}$, it is obvious that $c_r=d\LLL_{(2n,1)}$ can be realized only with $n=2$ and $r=2$.
\QED
\end{proof}
In this case, the second Chern class can be recognized as a conserved Noether charge due to {\bf Remark 3}.
As a special case of (\ref{ses1}),  the short exact sequence based on {\bf Theorem \ref{YK}} can be obtained as follows:
\begin{eqnarray}
0\rightarrow c_2(\FFF)~{\rm on}~\M_5 \rightarrow \LLL_{(4,1)}\rightarrow H^*_D(\FFF) \rightarrow0,\label{ses1}
\end{eqnarray}
where $H^*_D(\FFF)$ is the de Rham cohomology on the co-Poincar\'{e} curvature bundle $\FFF$ on the five-dimensional space-time manifold.
From this observation, the following theorem can be stated:
\begin{theorem}{\bf (cohomological class of the Einstein--Hilbert action)}
On a four-dimensional space-time manifold, the Einstein--Hilbert Lagrangian $\LLL_{(4,1)}$ is divisible modulo $\R/\Z$.
\end{theorem}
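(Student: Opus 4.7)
The plan is to follow the template of \textbf{Theorem \ref{zanelli}} essentially verbatim, using \textbf{Theorem \ref{YK}} as the bridge that recasts the four-dimensional Einstein--Hilbert action as a second Chern class on a five-dimensional ambient space-time manifold. Once this identification is in hand, the Chern--Weil theorem (\textbf{Theorem \ref{CW}}) supplies the necessary closedness and exactness properties, and Stokes' theorem on the boundary $\partial\tilde{\M}_5=\M_4$ converts the bulk integral into a topological invariant on $\M_4$.

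Concretely, I would take two distinct co-Poincar\'{e} curvature forms $\FFF$ and $\FFF'$ on the ambient manifold $\M_5$, each extending a four-dimensional configuration on the boundary $\M_4=\partial\tilde{\M}_5$. By \textbf{Theorem \ref{YK}}, the associated Einstein--Hilbert actions satisfy
\begin{eqnarray*}
\delta\I_4 \;=\; \I_4(\FFF)-\I_4(\FFF')
&=& \frac{8\pi^2}{\kappa}\int_{\tilde{\M}_5}\bigl(c_2(\FFF)-c_2(\FFF')\bigr).
\end{eqnarray*}
The second part of \textbf{Theorem \ref{CW}} then guarantees that the integrand is exact on the co-Poincar\'{e} bundle, so one may write $c_2(\FFF)-c_2(\FFF')=d\CCC$ for a suitable transgression form $\CCC$. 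Applying Stokes' theorem reduces the bulk expression to a boundary integral,
\begin{eqnarray*}
\delta\I_4 &=& \frac{8\pi^2}{\kappa}\int_{\M_4}\CCC ,
\end{eqnarray*}
and the integral on the right measures the degree of the corresponding de Rham class on the closed four-manifold $\M_4$, which is integer-valued. Consequently $\delta\I_4$ is quantized, and a normalization of $\kappa$ makes the action well-defined modulo $\R/\Z$, exactly as in the odd-dimensional case.

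The main obstacle, as in \textbf{Theorem \ref{zanelli}}, is justifying the integrality of $\int_{\M_4}\CCC$: the Chern--Weil theorem only asserts that the class is de Rham closed, so integrality requires recognizing the class as the image of an integral cohomology class under the natural map $H^*(\M_4;\Z)\to H^*(\M_4;\R)$. Here one must verify this for the specific co-Poincar\'{e} bundle defined through $(\vomega^{ab},\SSS_{cd})$ and the isomorphism $SO(4)\simeq SU(2)\times SU(2)$; the argument should reduce to the classical integrality result for second Chern classes of principal bundles with compact structure group, but the mixed-degree nature of $\AAA$ (a one-form $\vomega$ paired with a two-form $\SSS$) means one should check that the transgression $\CCC$ truly descends to an ordinary four-form on $\M_4$ and not just to an element of some larger graded complex. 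Aside from this verification, the proof is a direct transcription of Zanelli's argument to the even-dimensional setting made accessible by \textbf{Theorem \ref{YK}}.
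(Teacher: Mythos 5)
Your proposal matches the paper's intent exactly: the paper's own proof of this theorem is the single sentence ``The proof is the same as that of \textbf{Theorem \ref{zanelli}},'' and what you have written is precisely that transcription --- two curvature forms, exactness of $c_2(\FFF)-c_2(\FFF')$ via \textbf{Theorem \ref{CW}-2}, Stokes' theorem down to $\M_4=\partial\tilde{\M}_5$, and integrality of the resulting de Rham degree. Your closing caveat about verifying integrality for the mixed-degree co-Poincar\'{e} connection is a legitimate concern that the paper itself does not address, but it does not change the fact that your route is the paper's route.
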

\begin{proof}
The proof is the same as that of {\bf Theorem \ref{zanelli}}.
\QED
\end{proof}
\begin{example}{\bf (cosmological constant in de Sitter spacetime)}:\label{ex1}
On a four-dimensional space-time manifold, the Einstein--Hilbert-type Lovelock Lagrangian with a cosmological term can be written as follows:
\begin{eqnarray}
\LLL_{\Lambda}&=&\frac{1}{4}\left(\LLL_{(4,1)}+\frac{\Lambda}{3!}\LLL_{(4,0)}\right),
\end{eqnarray}
where $\Lambda$ is the cosmological constant.
A solution of the Einstein equation induced by $\LLL_\Lambda$ is obtained as follows:
\begin{eqnarray}
\eee^a&=&\left(f(r)dt,f^{-1}(r)dr,rd\theta,r\sin{\theta}d\phi\right),\label{dSsol}
\end{eqnarray}
where $f(r)=\sqrt{1+\Lambda r^2/3}$.
Herein, this solution is referred to as the de Sitter (dS) solution, whether the cosmological constant is positive or negative.
While the Lagrangian $\LLL_\Lambda$ itself is not a Chern class, the dS solution induces the topological invariant.
When the vielbein form (\ref{dSsol}) is substituted to (\ref{YK2}), a topological invariance can be obtained as
\begin{eqnarray}
d\left(\epsilon_{\cdott\cdott}\RRR^\cdott\wedge\eee^\cdot\wedge\eee^\cdot\right)&=&8\Lambda~d\vvv=8{\pi^2}c_2,
\end{eqnarray}
where $\vvv=\epsilon_{\cdott\cdott}\eee^\cdot\wedge\eee^\cdot\wedge\eee^\cdot\wedge\eee^\cdot/4!$ is the volume form of a four-dimensional space-time manifold.
Assume that the four-dimensional dS space $\M_4$ is embedded in a five-dimensional space-time manifold, $\M_5$, with no boundary, i.e., $\M_4=\partial(\tilde{M}_5\subset\M_5)$.
With this assumption, the following relation can be obtained:
\begin{eqnarray}
\Lambda\int_{\M_4}\vvv&=&{\pi^2}\int_{\tilde{\M}_5}c_2.
\end{eqnarray}
Therefore, based on  {\bf Theorem \ref{YK}},  the cosmological constant is divisible modulo $\R/\Z$.
\end{example}

While only the pure gravitational Lagrangian without matter fields has been discussed in this report so far, topological properties of the gravitational part may affect matter fields, which are the sources of gravity.
Let us now discuss a possible effect of the topological properties of general relativity on matter fields in the simplest case.
Let $\varphi$ be a real  scalar field, which is a map $\varphi:\M\rightarrow\R:\xi\mapsto\varphi(\xi)$, in which $\xi$ is a local frame-vector field.
The field $\varphi$ is assumed to be differentiable an arbitrary number of times. 
The four-dimensional Lagrangian of the scalar field can be written on a local space-time manifold  as follows:
\begin{eqnarray}
\LLL_s&=&\frac{1}{2}
\left(
\eta^\astt\iota_\ast\sss^\cdot\wedge\iota_\ast\sss^\cdot-\frac{1}{3!}V(\varphi)\eee^\cdot\wedge\eee^\cdot
\right)\wedge\SSS_\cdott,\label{Ls}
\end{eqnarray}
where $V(\varphi)(\xi)$ is the potential energy (functional of $\varphi(\xi)$) and $\sss^\bullet(\xi)$ is the scalar-field two-form defined as
\begin{eqnarray}
\sss^a&=&d\varphi\wedge\eee^a=\frac{\partial\varphi}{\partial\xi^\cdot}\eee^\cdot\wedge\eee^a.
\end{eqnarray}
The first term  in a parenthesis in (\ref{Ls}) (kinetic term) can be represented as $g^{\mu\nu}\partial_\mu\varphi~\partial_\nu\varphi$ using  global frame vectors.
In general, the Lagrangian of the scalar field has no topological invariance.
The total Lagrangian is given by the sum of the gravitational and scalar-field Lagrangians as follows:
\begin{eqnarray}
\LLL_{{\rm tot}}&=&\LLL_G+\LLL_s=
\frac{1}{2\kappa}\RRR^\cdott\wedge\SSS_\cdott
+\frac{1}{2}\eta^\astt\iota_\ast\sss^\cdot\wedge\iota_\ast\sss^\cdot\wedge\SSS_\cdott
-V(\varphi)\vvv.
\end{eqnarray}
The equations of motion can be obtained using a stationary condition on the action integral with respect to the virbein, spin and scalar-filed forms as 
\begin{eqnarray}
\delta_\aaa\I&=&\delta_\aaa\int\left(\LLL_G+\LLL_s\right)=0,
\end{eqnarray}
where $\aaa=\www$, $\eee$, or $\sss$.
Possible configurations of the scalar field and space-time manifold can be determined by simultaneously solving these equations  under specific boundary conditions.
Although the scalar field itself does not have any topological invariants, a configuration of the scalar field must inevitably receive some constraint from the space-time manifold. 
\begin{example}{\bf  Friedmann-Lme\^itre-Robertson-Walker metric based on a scalar field}:
The Friedmann-Lme\^itre-Robertson-Walker (FLRW) metric\cite{Einstein1922,Lmeitre,1935ApJ82284R,Walker01011937} is a homogeneous and isotropic solution of the Einstein equation given by
\begin{eqnarray}
\eee^a&=&\left(dt,a(t)f^{-1}(r)dr,a(t)rd\theta,a(t)r\sin{\theta}d\phi\right),\label{FLRW}
\end{eqnarray}
where $f(r)=\sqrt{1-K r^2}$ and $a(t)$ is a scale function.
The scale function $a(t)$ is given as a solution of the Friedmann equation
\begin{eqnarray}
\frac{\ddot{a}}{a}&=&-\frac{\kappa}{3}\left(
\rho+3P
\right),
\end{eqnarray}
where $\ddot{a}=d^2a(t)/dt^2$.
If there is only one scalar field on the space-time manifold, $\rho$ and $P$ are given by the energy-momentum tensor $T_\bullett$ of the scalar field  as follows:
\begin{eqnarray}
T_{ab}&=&
\partial_{a}\varphi~\partial_{b}\varphi
-\frac{1}{2}\eta_{ab}\partial_\cdot\varphi~\partial^\cdot\varphi
+\eta_{ab}V,\\
\rho&=&T^0_{~0},~P=\frac{T^1_{~1}}{3}\left(=\frac{T^2_{~2}}{3}=\frac{T^3_{~3}}{3}\right).
\end{eqnarray}
From the FLRW metric (\ref{FLRW}), the second Chern class can be calculated using (\ref{YK2}) as follows:
\begin{eqnarray}
d\left(\epsilon_{\cdott\cdott}\RRR^\cdott\wedge\eee^\cdot\wedge\eee^\cdot\right)
&=&\frac{1}{3}\left(\rho-3P\right)=8{\pi^2}c_2.
\end{eqnarray}
Note that $\rho-3P=\eta^{ab} T_{ab}$, and thus, it is independent of the choice of a global frame bundle.
Therefore, under the condition same as that of {\bf Example \ref{ex1}}, the trace of the energy-momentum tensor of the scalar field $Tr[T^\bullet_{~\bullet}]=\sum_a\eta^{a\cdot} T_{\cdot a}$ is divisible modulo $\R/\Z$.
Note that the trace of the energy-momentum tensor is invariant under the global $GL(1,3)$.
Explicit calculations can show a relation
\begin{eqnarray}
T_{ab}&\rightarrow&T_{\mu\nu}=
\partial_{\mu}\varphi~\partial_{\nu}\varphi
-\frac{1}{2}g_{\mu\nu}g^{\lambda_1\lambda_2}\partial_{\lambda_1}\varphi~\partial_{\lambda_2}\varphi
+g_{\mu\nu}V,\\
Tr[T^a_{~b}]&=&Tr[T^\mu_{~\nu}]=\sum_{\mu,\nu} g^{\mu\nu}T_{\nu\mu},
\end{eqnarray}
where a concrete form of $g^{\mu\nu}$ can be read out from (\ref{FLRW}).
\end{example}

\section{Summary}
The co-translation operator that performs successive operations of contraction and translation is introduced.
According to this new operation, the co-Poincar\'{e} connection and curvature bundles can be defined on the space-time manifold.
We show that some types of Lovelock Lagrangians are invariant up to a total derivative under co-translation.
Consequently, we obtain topological invariants in general relativity.
The main result of this study shows that a four-dimensional Einstein--Hilbert Lagrangian can be recognized as the second Chern class, which defines the de Rham cohomology in general relativity.
Moreover, the results show that the cosmological constant and trace of the energy-momentum tensor based on the scalar field are modulo $\R/\Z$ as examples of this theorem.
\vskip 5mm
\thanks {
I appreciate the kind hospitality of all members of the theory group of Nikhef, particularly Prof. J. Vermaseren and Prof. E. Laenen.
A major part of this study has been conducted during my stay at Nikhef in 2016.
In addition, I would like to thank Dr.~Y.~Sugiyama for his continuous encouragement and fruitful discussion.
}
%
%

%
%
\bibliographystyle{spmpsci}
\bibliography{ref}
\end{document}